\documentclass[journal]{IEEEtran}
\usepackage{cite}
\usepackage{pifont,amssymb,url}
\usepackage{multicol,mathrsfs}
\usepackage[usenames,dvipsnames,svgnames,table]{xcolor}
\usepackage{bm,bbm,mathrsfs,amssymb,pifont,amssymb,pifont,amsfonts,amsmath,stmaryrd,color,amssymb,graphics,graphicx,epstopdf}
\usepackage{amscd,graphicx,array,dsfont,texdraw,tikz,enumerate,multicol,multirow,verbatim}
\usetikzlibrary{arrows,shapes,shadows,positioning,automata,patterns}
\usetikzlibrary{matrix,arrows,calc,trees,decorations.pathmorphing,decorations.markings, decorations.pathreplacing}
\usepackage[T1]{fontenc}
\usepackage{booktabs}
\usepackage{algorithm,algorithmicx,algpseudocode}
\usepackage{amsmath}

\usetikzlibrary{arrows.meta}
\usetikzlibrary{decorations.pathreplacing}

\newcommand{\dd}{{\rm d}\hbox{\hskip 0.5pt}}

\newcommand{\bb}{\color{black}}
\newcommand{\bbb}{\color{black}}

\newcommand{\bremark}{\smallskip\begin{remark}\begin{rm}}
\newcommand{\eremark}{\end{rm}\hfill$\boxempty$\end{remark}\smallskip}
\newcommand{\btheorem}{\smallskip\begin{theorem} \begin{it}}
\newcommand{\etheorem}{\end{it}\end{theorem}\smallskip}
\newcommand{\blemma}{\smallskip\begin{lemma} \begin{it} }
\newcommand{\elemma}{\end{it}\end{lemma}\smallskip}
\newcommand{\bcorollary}{\smallskip\begin{corollary} \begin{it} }
\newcommand{\ecorollary}{\end{it}\end{corollary}\smallskip}
\newcommand{\bdefinition}{\smallskip\begin{definition}\begin{rm}}
\newcommand{\edefinition}{\end{rm}\hfill$\boxempty$\end{definition}\smallskip}
\newcommand{\bproposition}{\smallskip\begin{proposition}\begin{it}}
\newcommand{\eproposition}{\end{it}\end{proposition}\smallskip }
\newcommand{\bexample}{\smallskip\begin{example}\begin{rm}}
\newcommand{\eexample}{\end{rm}\hfill$\boxempty$\end{example}\smallskip}
\newcommand{\bproblem}{\smallskip\begin{problem}\begin{rm}}
\newcommand{\eproblem}{\end{rm}\hfill$\boxempty$\end{problem}\smallskip}

\newcommand{\bassume}{\smallskip\begin{assumption}\begin{rm}}
\newcommand{\eassume}{ \end{rm}\hfill$\boxempty$\end{assumption}\smallskip}
\newcommand{\bfact}{\smallskip\begin{fact}\begin{it}}
\newcommand{\efact}{ \end{it} \end{fact}\smallskip}
\newcommand{\bcondition}{\begin{condition}\begin{rm}}
\newcommand{\econdition}{ \end{rm}\hfill$\boxempty$\end{condition}\smallskip}

\newtheorem{theorem}{Theorem}[section]
\newtheorem{lemma}{Lemma}[section]
\newtheorem{corollary}{Corollary}[section]
\newtheorem{definition}{Definition}[section]
\newtheorem{proposition}{Proposition}[section]
\newtheorem{problem}{Problem}[section]
\newtheorem{property}{Property}[section]

\newtheorem{myremark}{Remark}[section]
\newenvironment{remark}{\begin{myremark}\normalfont}{\end{myremark}}
\newtheorem{myexample}{Example}[section]
\newenvironment{example}{\begin{myexample}\normalfont}{\end{myexample}}
\newtheorem{assumption}{Assumption}
\newtheorem{fact}{Fact}[section]
\newtheorem{condition}{\it Condition}[section]

\newcommand{\bbm}[1]{\left[\begin{matrix} #1 \end{matrix}\right]}

\newcommand{\bproof}{ \textit{Proof:} \begin{rm} }
\newcommand{\eproof}{ \end{rm} \hfill Q.E.D.}  




\newcommand{\rline}{{\mathbb R}}

\newenvironment{proof}{\vspace{.1cm}\noindent{\sc Proof.}\hspace{0.10cm}\,\,}{$\hfill\Box$\vspace{.1cm}}

\newcommand{\rfb}[1]{\mbox{\rm
   (\ref{#1})}\ifx\undefined\stillediting\else:\fbox{$#1$}\fi}
   




\newcommand{\bluff}{{\hbox{\raise 15pt \hbox{\hskip 0.5pt}}}}

\newfont{\roma}{cmr10 scaled 1200}

\hyphenation{op-tical net-works semi-conduc-tor}

\pagestyle{empty} 
\usepackage{hyperref}
\hypersetup{hypertex=true,
 	colorlinks=true,
 	linkcolor=blue,
 	anchorcolor=blue,
       citecolor=blue}

\begin{document}

\title{Distributed formation control of end-effector of mixed planar fully- and under-actuated manipulators}

\author{Zhiyu Peng, Bayu Jayawardhana, Xin Xin

\thanks{Zhiyu Peng is with School of Automation, Southeast University, Nanjing 210096, China (e-mails: 230208667@seu.edu.cn). Bayu Jayawardhana is with Engineering and Technology Institute Groningen, Faculty of Science and Engineering, University of Groningen, Groningen 9747 AG, The Netherlands (e-mails: b.jayawardhana@rug.nl).
Xin Xin is with Faculty of Computer Science and Systems Engineering, Okayama Prefectural University, Okayama 719-1197, Japan (e-mail: xxin@cse.oka-pu.ac.jp). The work of Z. Peng is supported by the China Scholarship Council under Grant No.~202206090189. The work of X. Xin is supported by National Natural Science Foundation of China under Grant No.~61973077.
}
}

\maketitle
\thispagestyle{empty}

\begin{abstract}
This paper addresses the problem of end-effector
formation control for a mixed group of two-link manipulators moving in a horizontal plane that comprises of fully-actuated manipulators and underactuated manipulators with only the second joint being actuated (referred to as the passive-active (PA) manipulators). The problem is solved by extending the distributed end-effector formation controller for the fully-actuated manipulator to the PA manipulator moving in a horizontal plane by using its integrability. This paper presents stability analysis of the closed-loop systems under a given necessary condition, and we prove that the manipulators' end-effector converge to the desired formation shape. The proposed method is validated by simulations. 
\end{abstract}

\begin{IEEEkeywords}
Distributed formation control, underactuated manipulator, end-effector control.
\end{IEEEkeywords}

\section{Introduction}
\label{sec:introduction}

\IEEEPARstart{R}ECENTLY, the distributed formation control for manipulators has attracted significant interests, which allows 
a group of industrial manipulators to collectively carry out a complex task. For example, Wu et al. \cite{wu2022distributed} investigate the distributed end-effector formation control of fully-actuated manipulators (the number of whose inputs is equal to its degree-of-freedom). The results are based on the use of virtual springs between the edges of an infinitesimally rigid formation graph of end-effectors. However, when underactuated manipulators (which have fewer inputs than the degree-of-freedom \cite{oriolo1991control, Spo95,fantoni2000energy,wu2019general,lai2015stabilization}) are used for some of the agents, the results are no longer applicable. In this case, it remains an open problem whether a desired formation shape can be made attractive by distributed control laws.

Underactuated manipulators have a wide range of applications due to the cheap cost and simple structure. Furthermore, one can regard a fully-actuated manipulator as underactuated when some of its joint actuators are faulty. In this regards, the control of an underactuated manipulator has been a central research topic for the past decades, which is particularly challenging owing to the second-order nonholonomic constraints \cite{FL01,XL14}. 

In this paper, we study the distributed end-effector formation control for a group of two-link manipulators moving in a horizontal plane, whose motions are not affected by gravity. Different from the work in \cite{wu2022distributed}, we consider that some manipulators in the group only have a single actuator at the second joint. These manipulators are referred to as the passive-active (PA) manipulator (see Fig. \ref{fig1}), which is a typical planar underactuated manipulator.

In our main results, we tackle some new difficulties in controller design and stability analysis. {\bb Firstly, we extend the distributed end-effector formation control for a group of fully-actuated manipulators in \cite{wu2022distributed} to the above-mentioned mixed group of manipulators.} For the PA manipulator with zero initial joint velocity (or joint angular velocity), the authors in \cite{oriolo1991control} demonstrate its integrability and show that it is holonomic. That means the end-effector of the PA manipulator (with zero initial joint velocity) actually moves in a (curved) line rather than in a plane, and the end-effector position depends entirely on its actuated joint position (or joint angle). Based on the integrability, Lai et al. \cite{lai2015stabilization} derive constraints on the joint position/velocity of the PA manipulator with zero initial joint velocity. For solving the distributed formation control of fully-actuated manipulators' end-effector, Wu et al. \cite{wu2022distributed} use the Jacobian matrix of the fully-actuated manipulator which is obtained by taking the partial derivative of its end-effector position with respect to its joint position. Inspired by the result in \cite{wu2022distributed}, we are able to obtain the Jacobian matrix of the PA manipulator (with zero initial joint velocity) without relying on the computation of the derivative due to the results in \cite{oriolo1991control} and \cite{lai2015stabilization}.

Secondly, we present a mild condition such that the proposed distributed end-effector formation control laws are admissible, and subsequently we conduct the stability analysis for the closed-loop systems. 
In \cite{wu2022distributed}, Wu et al. assume that the Jacobian matrix of every fully-actuated manipulator is full rank, {\bb however, this condition can not be fulfilled by the PA manipulator since its Jacobian matrix reduces to a column vector.} For the PA manipulator with zero initial joint velocity, 
we find another equation to obtain a square augmented Jacobian matrix. Under the condition of all obtained augmented Jacobian matrices being invertible and all fully-actuated manipulators' Jacobian matrix being full rank, we prove that all manipulators' end-effector converge to the desired formation shape. Through a number of simulation results, we show that the presented condition is always satisfied and the proposed control laws are effective if the group of manipulators starts from a neighborhood of the desired and reachable formation shape.

The rest of this paper is organized as follows. We present the manipulator dynamics and kinematics in Section \ref{sec:preliminaries}. Section \ref{sec:preliminaries} also contains some preliminaries on the graph theory, standard distributed formation control, and main problem formulation. Section \ref{main result} presents the distributed end-effector formation control laws and the corresponding stability analysis. Sections \ref{simu} and \ref{conclusion}  give simulation and the conclusions respectively.

\section{Preliminaries and Problem Formulation}
\label{sec:preliminaries}

This paper concerns the distributed end-effector formation control for a group of $N>1$ two-link manipulators moving in the same horizontal $X-Y$ plane.  Consider that $N_1\geq 1$ 
of these manipulators are fully-actuated, and the rest $N_2=N-N_1$ 
manipulators are underactuated with a single actuator at their second joint, which we refer to as the PA manipulators. The manipulators' mechanical parameters are described in Table \ref{para}.

\emph{Notation:} For column vectors $x_1$, ..., $x_n$, let $\operatorname{col}\left(x_1, \ldots, x_n\right): =\left[x_1^\mathrm{T}, \ldots, x_n^\mathrm{T}\right]^\mathrm{T}$ be the
stacked column vector. Use a short-hand notation $\bar {B}:= B\otimes I_2$, where $I_2 \in \mathbb{R}^{2\times2}$ is the identity matrix and $\otimes$ denotes the Kronecker product. 

\subsection{Manipulator Dynamics and Kinematics}
\begin{table}[tp]%
\caption{The mechanical parameters of manipulator $i$.}
\label{para}\centering %
\begin {tabular}{clccc}
\toprule %
		{\bb Symbol ($j=1,2$)} & Description \\
		\midrule
		$m_{i,j}$  & Mass of link $j$\\
		\midrule
		$I_{i,j}$  & Moment of inertia of link $j$ with\\ & respect to its center-of-mass (COM) \\
		\midrule
		$L_{i,j}$  & Length of link $j$ \\
		\midrule
		$l_{i,j}$  & Distance between joint $j$\\ & and the COM of link $j$\\
		\midrule
		$q_{i,j}$  & Angle of joint $j$ \\
		\midrule
		$u_{i,j}$  & Torque applied to joint $j$\\
		\bottomrule
\end {tabular}
\end {table}

\begin{figure}
	\centering
	\includegraphics[scale=0.4]{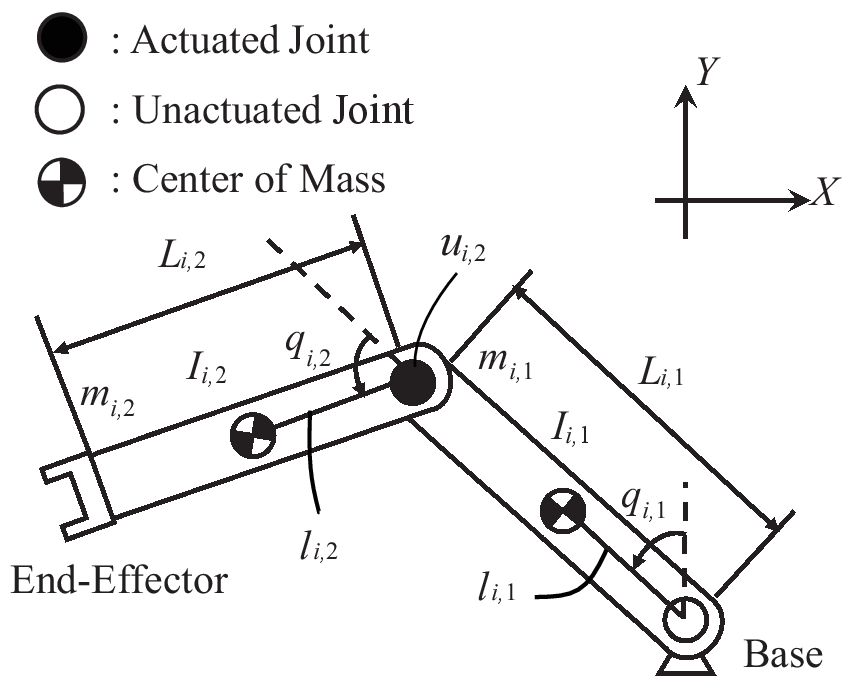}
	\caption{{\bb The PA manipulator: a  planar two-link underactuated manipulator with an unactuated (or passive) first joint and an actuated (or active) second joint.}} 
 \label{fig1}
\end{figure}

Based on the Euler-Lagrange equation \cite{murray2017mathematical, spong2006robot}, the group of two-link planar manipulators is modeled by
\begin{equation}
	M_i\left(q_i\right) \ddot{q}_i+C_i\left(q_i, \dot{q}_i\right) \dot{q}_i= u_i, \label{full}
\end{equation}
where $i \in\{1,...,N\}$, and $q_i=[q_{i,1}, q_{i,2}]^\mathrm{T}$ is the generalized joint position. {\bb Without loss of generality, suppose that for all $i \in \{1,...,N_1\}$, the $i$-th manipulator is fully-actuated and $u_i= [u_{i,1}, u_{i,2}]^\mathrm{T}$ is its generalized forces; otherwise for all $i \in \{N_1+1,...,N\}$, the $i$-th manipulator is an underactuated PA manipulator with $u_i= [0, u_{i,2}]^\mathrm{T}$. }
The matrix $M_i(q_i)$ is the mass matrix and $C_i({q_i},{{\dot q}_i})$ is the Coriolis and centrifugal term, which are respectively given by 
\[\begin{aligned}
M_i\left(q_i\right) & =\left[\begin{array}{lc}
M_{i,11}\left(q_i\right) & M_{i,12}\left(q_i\right) \\
M_{i,21}\left(q_i\right) & M_{i,22}\left(q_i\right)
\end{array}\right] \\
& =\left[\begin{array}{cc}
\alpha_{i,1}+\alpha_{i,2}+2 \alpha_{i,3} \cos q_{i,2} & \alpha_{i,2}+\alpha_{i,3} \cos q_{i,2} \\
\alpha_{i,2}+\alpha_{i,3} \cos q_{i,2} & \alpha_{i,2}
\end{array}\right],
\end{aligned}\]
\[C_i({q_i},{{\dot q}_i}) = {\alpha _{i,3}}\left[ {\begin{array}{*{20}{c}}
		{ - {{\dot q}_{i,2}}\;\;}&{ - {{\dot q}_{i,1}} - {{\dot q}_{i,2}}}\\
		{{{\dot q}_{i,1}}}&0
\end{array}} \right]\sin {q_{i,2}},\]
where the mechanical parameters $\alpha_{i,1}, \alpha_{i,2}, \alpha_{i,3}$ are as follows
\begin{equation}
\left\{\begin{array}{l}
	\alpha_{i,1}=m_{i,1} l_{i,1}^2+m_{i,2} L_{i,1}^2+I_{i,1}, \vspace{0.1cm}\\
	\alpha_{i,2}=m_{i,2} l_{i,2}^2+I_{i,2}, \vspace{0.1cm}\\
	\alpha_{i,3}=m_{i,2} L_{i,1} l_{i,2} .  \label{parameters}
\end{array}\right.
\end{equation}


\begin{property} \label{property}
{\bb Following standard properties of Euler-Lagrange systems \cite{spong2006robot, Ortega-book}, the manipulators 
(\ref{full}) satisfy the following two properties: 
\begin{enumerate}
\item[{\bf P1.}] The matrix $M_i(q_i)$ is positive definite; 	
\item[{\bf P2.}] The matrix $\dot M_i(q_i)-2C_i(q_i, \dot q_i)$ is skew-symmetric. 
\end{enumerate}}
\end{property}

Accordingly, due to {\bf P1}, we can rewrite (\ref{full}) compactly as 
\begin{equation}
	{\ddot q = {M^{ - 1}}(q)\Big(\bar u - C(q,\dot q)\dot q\Big)}, \label{dd}
\end{equation}
where $q=\operatorname{col}\left(q_1, \ldots, q_N\right)\in \mathbb{R}^{2N}$, $\bar u = \operatorname{col}\left(u_1, \ldots, u_N\right)\in \mathbb{R}^{2N}$ are the stacked vectors of $ q_i$ and $ u_i$ respectively and the matrices $ M(q), C(q,\dot q) \in \mathbb{R}^{2N \times 2N}$ are the block diagonal matrices of all $M_i(q_i), C_i(q_i,\dot q_i)$, respectively.

For manipulator $i$, let $x_i(t) \in \mathbb{R}^2$ be its end-effector position in the task-space, which can be obtained by
\begin{equation}
	x_i: = [x_{i,X}, x_{i,Y} ]^\mathrm{T} = h_i\left(q_i\right)+x_{i0}, \qquad  \label{task-space} 
\end{equation}
where $x_{i0} \in \mathbb{R}^2$ is the position of the fixed manipulator base and 
\begin{equation*}
	h_i\left(q_i\right) = \left[ {\begin{array}{*{20}{c}}
		{-{L_{i,1}}\sin \left( {{q_{i,1}}} \right) - {L_{i,2}}\sin \left( {{q_{i,1}} + {q_{i,2}}} \right)}\\
		{{L_{i,1}}\cos \left( {{q_{i,1}}} \right) + {L_{i,2}}\cos \left( {{q_{i,1}} + {q_{i,2}}} \right)}
\end{array}} \right]. \label{hk}
\end{equation*}
Differentiating (\ref{task-space}) with respect to time $t$ leads
\begin{equation}
\begin{aligned}
& \dot{x}_i=J_i\left(q_i\right) \dot{q}_i,\\
& J_i\left(q_i\right):=\frac{\partial h_i\left(q_i\right)}{\partial q_{i}}=\left[\begin{array}{ll}
J_{i,11}\left(q_i\right) & J_{i,12}\left(q_i\right) \\
J_{i,21}\left(q_i\right) & J_{i,22}\left(q_i\right)
\end{array}\right],
\end{aligned}\label{velocity}
\end{equation}
where $J_i(q_i)$ is the Jacobian matrix \cite{murray2017mathematical,spong2006robot} of the manipulator $i$, and 
 \begin{align*}
& J_{i,11}\left(q_i\right)= - L_{i,1} \cos \left(q_{i,1}\right) - L_{i,2} \cos \left(q_{i,1}+q_{i,2}\right), \\
& J_{i,12}\left(q_i\right)= - L_{i,2} \cos \left(q_{i,1}+q_{i,2}\right), \\
& J_{i,21}\left(q_i\right)=-L_{i,1} \sin \left(q_{i,1}\right) - L_{i,2} \sin \left(q_{i,1}+q_{i,2}\right), \\
& J_{i,22}\left(q_i\right)= - L_{i,2} \sin \left(q_{i,1}+q_{i,2}\right).
 \end{align*}


\subsection{Properties of the Underactuated PA Manipulator}

Let us recall some properties of  the  PA manipulator $i \in \{N_1+1,...,N\}$ given by (\ref{full}) with $u_i= [0, u_{i,2}]^\mathrm{T}$. We denote  $q_{i}(0)=[q_{i,1}(0), q_{i,2}(0)]^\mathrm{T}$ and $\dot q_{i}(0)=[\dot q_{i,1}(0), \dot q_{i,2}(0)]^\mathrm{T}$ as its initial joint position and initial joint velocity, respectively. Firstly, recall an important property on its joint position $q_{i}$ and joint velocity $\dot q_{i}$, which is presented by the following lemma and we refer to the results in  \cite{oriolo1991control,lai2015stabilization} for detailed discussion.\vspace{0.2cm}

\begin{lemma}
	\cite{oriolo1991control,lai2015stabilization}
 If the  PA manipulator $i$ starts from a stationary position, i.e. $\dot q_{i}(0)=0$, then its joint velocity $ \dot q_{i}$ satisfies
	\begin{equation}
M_{i,11}(q_i) \dot{q}_{i,1}+M_{i,12}(q_i) \dot{q}_{i,2}=0.   \label{speed}
	\end{equation}
Furthermore, assume that $q_{i,2}(0) \in [-\pi, \pi]$ and $q_{i,2} \in [-\pi+2k\pi, \pi+2k\pi]$, $k \in \mathbb{Z}$. Then the system is holonomic and its unactuated joint position $q_{i,1}$ depends entirely on its actuated joint position $q_{i,2}$ by 
	\begin{equation}
{q_{i,1}} = f(q_{i,2}) =  - \frac{{{q_{i,2}}}}{2} - \gamma \arctan \left( {\rho \tan \frac{{{q_{i,2}}}}{2}} \right) -\gamma k \pi + \eta,\label{angle constraint}
	\end{equation}
where
 \begin{align*}
 \gamma  = \frac{{{\alpha_{i,2}} - {\alpha_{i,1}}}}{{\sqrt {{{\left( {{\alpha_{i,1}} + {\alpha_{i,2}}} \right)}^2} - 4\alpha_{i,3}^2} }}, \; \rho  = \sqrt {\frac{{{\alpha_{i,1}} + {\alpha_{i,2}} - 2{\alpha_{i,3}}}}{{{\alpha_{i,1}} + {\alpha_{i,2}} + 2{\alpha_{i,3}}}}}, 
  \end{align*}
and $\displaystyle \eta = \frac{{{q_{i,2}}(0)}}{2} + {q_{i,1}}(0) + \gamma \arctan \left( \rho \tan \frac{{{q_{i,2}}(0)}}{2} \right)$. 
	\label{constraint}
\end{lemma}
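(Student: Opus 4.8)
The plan is to prove the two assertions in turn: first establishing the velocity constraint (\ref{speed}) as a conservation law for the passive joint, and then integrating it to obtain the closed-form position constraint (\ref{angle constraint}).

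For (\ref{speed}), I would start from the first (passive-joint) row of the dynamics (\ref{full}) with $u_{i,1}=0$, namely $M_{i,11}\ddot q_{i,1}+M_{i,12}\ddot q_{i,2}+(C_i\dot q_i)_1=0$. Using the explicit forms of $M_i$ and $C_i$, the first entry of $C_i\dot q_i$ is $-\alpha_{i,3}\sin q_{i,2}\,(2\dot q_{i,1}\dot q_{i,2}+\dot q_{i,2}^2)$. The key observation is that, since $\dot M_{i,11}=-2\alpha_{i,3}\sin q_{i,2}\,\dot q_{i,2}$ and $\dot M_{i,12}=-\alpha_{i,3}\sin q_{i,2}\,\dot q_{i,2}$, this entire row equals $\frac{d}{dt}\big(M_{i,11}\dot q_{i,1}+M_{i,12}\dot q_{i,2}\big)$. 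Hence $M_{i,11}\dot q_{i,1}+M_{i,12}\dot q_{i,2}$ is a conserved momentum of the unactuated joint; evaluating it at $t=0$ and using $\dot q_i(0)=0$ shows it vanishes identically, which is exactly (\ref{speed}).

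For (\ref{angle constraint}), I would recast (\ref{speed}) as the separable first-order ODE
\[
\frac{d q_{i,1}}{d q_{i,2}} = -\frac{M_{i,12}}{M_{i,11}} = -\frac{\alpha_{i,2}+\alpha_{i,3}\cos q_{i,2}}{\alpha_{i,1}+\alpha_{i,2}+2\alpha_{i,3}\cos q_{i,2}},
\]
which is well-posed because {\bf P1} guarantees $M_{i,11}>0$ for all angles; taking $q_{i,2}=\pi$ forces $\alpha_{i,1}+\alpha_{i,2}>2\alpha_{i,3}$, so $\gamma$ and $\rho$ are real and $\rho<1$. The algebraic identity $\alpha_{i,2}+\alpha_{i,3}\cos q_{i,2}=\tfrac12 M_{i,11}+\tfrac12(\alpha_{i,2}-\alpha_{i,1})$ splits the integrand into a constant $\tfrac12$ plus $\tfrac12(\alpha_{i,2}-\alpha_{i,1})/M_{i,11}$; the first piece integrates to $-q_{i,2}/2$, and the second is handled by the standard formula
\[
\int \frac{d\theta}{A+B\cos\theta} = \frac{2}{\sqrt{A^2-B^2}}\arctan\!\Big(\sqrt{\tfrac{A-B}{A+B}}\,\tan\tfrac{\theta}{2}\Big),
\]
with $A=\alpha_{i,1}+\alpha_{i,2}$ and $B=2\alpha_{i,3}$, producing precisely the term $-\gamma\arctan(\rho\tan(q_{i,2}/2))$.

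The hard part will be the branch of the arctangent: the antiderivative above jumps by $\pi$ whenever $q_{i,2}$ crosses an odd multiple of $\pi$, whereas the physical trajectory $q_{i,1}(q_{i,2})$ must stay continuous. I would resolve this by adding the compensating term $-\gamma k\pi$ on each interval $q_{i,2}\in[-\pi+2k\pi,\pi+2k\pi]$, which is exactly the regime the hypothesis isolates, so that the branch-corrected expression is a genuine $C^1$ antiderivative. Finally, since $q_{i,2}(0)\in[-\pi,\pi]$ corresponds to $k=0$, evaluating the resulting formula at $t=0$ and solving for the integration constant pins it down as $\eta=q_{i,1}(0)+q_{i,2}(0)/2+\gamma\arctan(\rho\tan(q_{i,2}(0)/2))$, which completes (\ref{angle constraint}) and, because $q_{i,1}$ is then a single-valued function of $q_{i,2}$, establishes holonomy.
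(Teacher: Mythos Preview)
Your argument is correct in both parts: the conserved-momentum identification for (\ref{speed}) checks out line by line against the explicit $M_i$ and $C_i$, and the separable integration with the $\tfrac12+\tfrac12(\alpha_{i,2}-\alpha_{i,1})/M_{i,11}$ splitting, the standard $\int d\theta/(A+B\cos\theta)$ primitive, and the branch correction $-\gamma k\pi$ all land exactly on (\ref{angle constraint}).

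As for comparison with the paper: there is nothing to compare against. The paper does not prove Lemma~\ref{constraint}; it is quoted verbatim from \cite{oriolo1991control,lai2015stabilization} and used as a black box. Your derivation is precisely the classical one underlying those references (first-integral of the unactuated row, then quadrature), so you have simply supplied what the paper omits. One minor point worth tightening: you invoke {\bf P1} at $q_{i,2}=\pi$ to get $\alpha_{i,1}+\alpha_{i,2}>2\alpha_{i,3}$, but positive-definiteness of $M_i$ alone gives $M_{i,11}>0$, i.e.\ $\alpha_{i,1}+\alpha_{i,2}+2\alpha_{i,3}\cos q_{i,2}>0$; to conclude the \emph{strict} inequality $\alpha_{i,1}+\alpha_{i,2}>2\alpha_{i,3}$ (needed for $\rho$ real and positive) you are implicitly using that $M_i(\pi)$ is positive definite, which in turn rests on the physical parameters in (\ref{parameters}) satisfying $\alpha_{i,1}\alpha_{i,2}>\alpha_{i,3}^2$. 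This holds because $I_{i,1},I_{i,2}>0$, but it is worth stating.
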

\hspace*{\fill}\vspace{0.2cm}

Now, for the PA manipulator $i$, we rewrite (\ref{task-space}) and (\ref{velocity}) according to Lemma \ref{constraint}. Substituting (\ref{angle constraint}) into (\ref{task-space}) gives us
\begin{equation}
   {x_i}: = {\left[ {{x_{i,X}},{x_{i,Y}}} \right]^{\rm{T}}} = \bar h_i\left( {{q_{i,2}}} \right) + {x_{i0}}, \label{task}
\end{equation}
where 
\begin{equation*}
	\bar h_i\left(q_i\right) = \left[ {\begin{array}{*{20}{c}}
		{-{L_{i,1}}\sin \left( f(q_{i,2}) \right) - {L_{i,2}}\sin \left( {f(q_{i,2}) + {q_{i,2}}} \right)}\\
		{{L_{i,1}}\cos \left( f(q_{i,2}) \right) + {L_{i,2}}\cos \left( {f(q_{i,2}) + {q_{i,2}}} \right)}
\end{array}} \right]. \label{hk2}
\end{equation*}
Note that $M_{i,11}(q_i)>0$ due to $M_i(q_i)$ being positive definite as in Property \ref{property}. By using (\ref{speed}), we can rewrite (\ref{velocity}) as  
\begin{equation}
\begin{aligned}
\dot{x}_i & =\left[\begin{array}{ll}
J_{i, 11}\left(q_i\right) & J_{i, 12}\left(q_i\right) \\
J_{i, 21}\left(q_i\right) & J_{i, 22}\left(q_i\right)
\end{array}\right]\left[\begin{array}{l}
\displaystyle\frac{-M_{i, 12}\left(q_i\right)}{M_{i, 11}\left(q_i\right)} \dot{q}_{i, 2} \\
\dot{q}_{i, 2}
\end{array}\right] \\
& =\bar{J}_i\left(q_i\right) \dot{q}_{i, 2},
\end{aligned}
\label{velocity 2}
\end{equation}
where the Jacobian matrix (vector) $\bar{J}_i\left(q_i\right)$ is
 \begin{equation*}
 \begin{aligned}
\bar{J}_i\left(q_i\right) & =\left[\bar{J}_{i, 1}\left(q_i\right), \bar{J}_{i, 2}\left(q_i\right)\right]^{\mathrm{T}} \\
& =\left[\begin{array}{c}
\displaystyle -\frac{J_{i, 11}\left(q_i\right) M_{i, 12}\left(q_i\right)}{M_{i, 11}\left(q_i\right)}+J_{i, 12}\left(q_i\right) \vspace{0.1cm}\\
\displaystyle -\frac{J_{i, 21}\left(q_i\right) M_{i, 12}\left(q_i\right)}{M_{i, 11}\left(q_i\right)}+J_{i, 22}\left(q_i\right)
\end{array}\right].
\end{aligned}
\end{equation*}
Note that we use the Jacobian matrix ${\bar J}_i(q_i)$ in the distributed controller for the  PA manipulator $i$. Two remarks on  $\bar{J}_i(q_i)$ are as follows. 

\begin{remark}
The Jacobian matrix ${\bar J}_i$ can be expressed as a function of 
$q_{i,2}$ only. When the real-time information of unactuated joint position $q_{i,1}$ is not available, we can instead use $q_{i,2}$ to substitute $q_{i,1}$ as given in \eqref{angle constraint}. However, in this particular case, the distributed controller for the  PA manipulator $i$ becomes complicated and is highly dependent on the initial joint position. 
\label{remark}
\end{remark}\vspace{0.2cm}

\begin{remark}
We can also obtain the Jacobian matrix	${\bar J}_i$ by differentiating the end-effector position of the PA manipulator with respect to its actuated
joint position similar to that for the fully-actuated manipulator, i.e. $ {{\bar J}_i} = \frac{{\partial {\bar h_i}\left( {{q_{i,2}}} \right)}}{{\partial {q_{i,2}}}}$. However, this approach is practically not feasible as it introduces computational complexity and makes the controller particularly complicated.
\end{remark}\vspace{0.2cm}

Define $\mathcal{W}_i$ as the working space of manipulator $i \in \{1,...,N\}$, and the 
entire working space for the networked manipulators is $\mathcal{W}:=\mathcal{W}_1 \times \cdots \times \mathcal{W}_N$. According to (\ref{task-space}), for the fully-actuated manipulator  $i \in \{1,...,N_1\}$ modeled by (\ref{full}) with $u_i= [u_{i,1}, u_{i,2}]^\mathrm{T}$, we have
\begin{equation}
 \mathcal{W}_i \subset\left\{x_i \in \mathbb{R}^2: x_i=h_i\left(q_i\right)+x_{i0}, \; q_i \in \mathbb{R}^2\right\}.
\end{equation}
Consider the  PA manipulator  $i \in \{N_1+1,...,N\}$ modeled by (\ref{full}) with $u_i= [0, u_{i,2}]^\mathrm{T}$, which starts from $\dot q_i(0)=0$. Assume that $q_{i,2}(0) \in [-\pi, \pi]$ and $q_{i,2} \in [-\pi+2k\pi, \pi+2k\pi]$, $k\in \mathbb{Z}$. According to (\ref{task}), we have
\begin{equation}
\mathcal{W}_i \subset\left\{x_i \in \mathbb{R}^2: x_i= \bar h_i\left(q_{i,2}\right)+x_{i0},\; q_{i,2} \in \mathbb{R} \right\}. \label{Wj}
\end{equation}
Note that, unlike the fully-actuated manipulator, the working space of the PA manipulator is in a line rather than a plane.

\subsection{Formation Graph and Mixed End-effector Distributed Formation Control Problem}
For a given desired geometrical formation shape of manipulators' end-effector, we can associate an undirected graph to the vertices and edges of the formation shape. Let us describe the corresponding 
formation graph by  $\mathcal{G}: =\{\mathcal{V}, \mathcal{E}\}$, where $\mathcal{V}:=\{1, \cdots, N\}$ is the vertex set and $\mathcal{E} \subset \mathcal{V} \times \mathcal{V}$ is the ordered edge set with $\mathcal{E}_k$ denoting the $k$-th edge. The numbers of vertices and edges of $\mathcal{G}$ are $|\mathcal{V}| = N$ and $|\mathcal{E}|$, respectively. 
{\bb The set of edges, where the end-effector $i$ is part of, is given by $\mathcal{I}_i:=\{k\in \{1,\ldots,|\mathcal E|\}:(i, j) = \mathcal{E}_k \text{ for some }j\}$.} 
We define the elements of the incidence matrix $B \in \mathbb{R}^{N \times|\mathcal{E}|}$ of $\mathcal{G}$ by
\begin{equation}
b_{i k}=\left\{\begin{array}{ll}
+1, & i=\mathcal{E}_k^{\text {tail }} \vspace{0.2cm}\\
-1, & i=\mathcal{E}_k^{\text {head }} \vspace{0.2cm}\\
0, & \text {otherwise}
\end{array} \; i = 1, \ldots,N, \; k=1, \ldots,|\mathcal{E}|,\right.
\end{equation}
where $\mathcal{E}_k^{\text {tail }}$ and $\mathcal{E}_k^{\text {head}}$ denote the tail and head nodes, respectively, of $\mathcal{E}_k$, i.e. $(\mathcal{E}_k^{\text {tail }}, \mathcal{E}_k^{\text {head}})=\mathcal{E}_k$. 
Let us stack all end-effectors’ position $x_i$ into $x=\operatorname{col}\left(x_1, \ldots, x_N\right) \in \mathbb{R}^{2N}$ and define the relative displacement $z$ by $z=\bar{B}^{\rm T}x$, whose elements correspond to the ordered $z_k:=x_i-x_j$. {\bb For a given desired formation shape defined by a vector of desired distances $d^*$ on the edges, let $x^*$ define a reference position such that $\|z_k(x^*)\| = d_k^*$ for all $k \in \{1,...,|\mathcal E|\}$, where $d_k^*$ denotes the $k$-th element of $d^*$, i.e. the desired distance of the edge $\mathcal{E}_k$.} 
Define the edge function $f_{\mathcal G}(x):=\mathop{\text{col}}\limits_{k\in \{1,\ldots,|\mathcal E|\}}\left(\|z_k\|^2\right)$, and the framework $(\mathcal G,x^*)$ is called {\it infinitesimally rigid} if the rank of $\frac{\dd f_{\mathcal G}}{\dd x}(x^*)$ is $2N-3$ (for 2D shape). We refer interested readers to \cite{Chan2021,Marina2016,Marina2018} and references therein for exposition on the rigid formation graph. Using the reference position $x^*$, let the set of all desired shapes be
\begin{equation}
\mathcal{S}:=\left\{x: x=\left(I_N \otimes R\right) x^*+\mathbf{1}_N \otimes b, R \in \mathbf{S O}(2), b \in \mathbb{R}^2\right\},
\end{equation}
where $\mathbf{1}_N \in \mathbb{R}^N$ denotes the vector whose all elements are one. 
Let $\mathcal{S}_W:=\mathcal{S} \cap \mathcal{W}$ be the set of desired shapes that are also reachable by the networked manipulators. \vspace{0.2cm}

\begin{problem}
{\bf (Mixed Planar {\bb Fully-actuated and PA Manipulators' End-Effector}  Distributed Formation Control Problem)} 
For the above setup of networked two-link manipulators with mixed  fully-actuated and PA manipulators and for a given desired infinitesimally rigid formation shape defined by the framework $(\mathcal G,x^*)$, design the distributed controller of the form
\begin{equation}
u_i = \sigma(\{z_k\}_{k\in \mathcal I_i},q_i,\dot q_i), \;\; i=1,...,N,
\end{equation}
such that $x(t) \to \mathcal{S}_W$ and $\dot q(t) \to 0$ as $t \to \infty$. 
\label{problem}
\end{problem}

Note that due to the nature of the  PA manipulators, the control inputs that are used for them are only those for the second joint. Hence for all $i\in \{N_1+1,\ldots, N\}$, the first element of $u_i$ is not used at all. 

\section{Proposed Distributed Formation Controller} \label{main result}

In the following, we will follow and modify accordingly the end-effector distributed formation controller presented in \cite{wu2022distributed}. For a given desired distance vector $d^*$ associated to an infinitesimally rigid formation shape, we define the error $e_k \in \mathbb{R}$ on the edge $\mathcal{E}_k$ by
$e_k: = \left\|z_k\right\|^2-(d_k^*)^2$. 
Based on the error vector $e=\operatorname{col}(e_1,\ldots,e_{|\mathcal E|})$, we define the potential function $V (e)$ for the formation by
\begin{equation}
V(e) := 
\frac{1}{2}\sum\limits_{k  = 1}^{\left| {\cal E} \right|} {e_k^2}. \label{Ve}
\end{equation}

In order to define the gradient-based distributed control for every manipulator $i \in \{1,...,N\}$, let $\hat e_i \in \mathbb{R}^{2} $ be the gradient of $V(e)$ along $x_i$, i.e. $\hat e_i: = \frac{\partial V(e)}{\partial x_{i}}$, which is expressed by the local relative displacement $z_k$ and the distance error $e_k$ for all $k\in \mathcal I_i$. 
Note that $\frac{{\partial {z_k }}}{{\partial {x_i }}} = {b_{i k}}$ and let ${D_k }\left( {{z_k }} \right) := \frac{{\partial {e_k }}}{{\partial {z_k }}} = 2{z_k }$. Routine computation shows that 
\begin{equation}
{\hat e_i} = \sum\limits_{k = 1}^{|{\cal E}|} {{b_{i k }}} {D_k }\left( {{z_k }} \right){e_k }. 
\label{hat e1}
\end{equation}
We can rewrite (\ref{hat e1}) in the following compact form 
\begin{equation}
\hat{e}=\frac{\partial V(e)}{\partial x} = \bar BD(z)e,  \label{defined}
\end{equation}
where $\hat e=\operatorname{col}\left(\hat e_1, \ldots, \hat e_{N}\right) \in \mathbb{R}^{2N}$ and the matrix $D(z)\in \mathbb{R}^{2|{\cal E}|\times |{\cal E}|}$ is the block diagonal matrix of $D_k(z_k)$ for all $k\in \{1,\ldots,|\mathcal E|\}$. As discussed in \cite{Marina2016}, the matrix $D^\mathrm{T}(z)\bar B^\mathrm{T} \bar BD(z)$ is positive definite if $\mathcal{G}$ is infinitesimally and minimally rigid. 

We are now ready to study the solvability of Problem \ref{problem} and to present the distributed formation controller. 
Prior to that, we need the following assumptions.\vspace{0.2cm}

\begin{assumption}
All PA manipulators start from a stationary position, i.e. $\dot q_i(0)=0$ for all $i\in\{N_1+1,\ldots, N\}$.  \label{initial}
\end{assumption}\vspace{0.2cm}

Let us briefly remark on this assumption. For the  PA manipulator $i$, (\ref{speed}) is only satisfied under Assumption \ref{initial}. When $\dot q_i(0) \ne 0$, we can not guarantee that the  PA manipulator can be stabilized at a target position \cite{lai2015stabilization}. Assumption \ref{initial} is not restrictive since we can always make a manipulator start from a stationary position. 

\begin{assumption}
There exists {\bb a neighborhood $\mathcal S_{W_r}$ of $\mathcal S_W$} 
s.t. 
\begin{description}
\item[{\bf A1.}] for every fully-actuated manipulator $i\in\{1,\ldots,N_1\}$, 
the Jacobian matrix $J_i(q_i)$ as in (\ref{velocity}) is full rank; and 
\item[{\bf A2.}] for every  PA manipulator $i\in\{N_1+1,\ldots,N\}$, 
the Jacobian matrix $\bar J_i(q_i)$ as in (\ref{velocity 2}) satisfies 
${\bar J_{i,1}}(q_i) {\bar J_{i,2}}(q_i) \ne 0$
\end{description}
for all $x \in \mathcal S_{W_r}$. \label{Ass:3}
\end{assumption}\vspace{0.2cm}

Firstly, we note that for {\bf A1}, the Jacobian  matrix $J_i(q_i)$ is not full rank only at $q_{i,2}=k\pi$ ($k \in \mathbb{Z}$) \cite[p.~21]{spong2006robot}. 
Secondly, for the fulfillment of {\bf A2}, the equation $\bar J_{i,1}(q_i) \bar J_{i,2}(q_i) = 0$ only holds at some isolated points, which can be calculated numerically as we show later in the simulation. Assumption \ref{Ass:3} 
is satisfied by continuity argument when the
desired shape has been chosen at a reference point $x^*$ where $q_i=q_i^*$ for all $i \in \{1,...,N\}$, such that $J_i(q_i^*)$ is full rank (for all $i \in \{1,...,N_1\}$) and $\bar J_{i,1}(q_i^*) \bar J_{i,2}(q_i^*) \ne 0$ (for all $i \in \{N_1+1,...,N\}$) in $\mathcal S_W$. \vspace{0.2cm}  

\begin{theorem}
	\label{the}
Consider {\bb the end-effector distributed formation control problem of mixed planar fully-actuated and PA manipulators} in Problem \ref{problem}. Under Assumptions \ref{initial} and \ref{Ass:3}, the problem can be solved locally 
by distributed formation control laws 
\begin{equation}
u_i = \left\{\begin{array}{rl}
- {K_P}J_i^\mathrm{T}(q_i){\hat e_i} - {K_D} \dot q_{i}, &  i = 1,...,N_1, \\ 
& \\
\bbm{0 \\ -K_P \bar{J}_i^{\mathrm{T}}(q_i) \hat{e}_i-K_D \dot{q}_{i,2}}, 
& i=N_1+1,\ldots,N,
\end{array} \right. \label{eq:controller}
\end{equation}	
where $K_P > 0$, $K_D > 0$ are controller gains, the Jacobian matrices $J_i(q_i)$, $\bar J_i(q_i)$ and the vector ${\hat e_i}$ are as in (\ref{velocity}), (\ref{velocity 2}) and (\ref{hat e1}), respectively.
\end{theorem}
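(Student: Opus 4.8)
The plan is a passivity/energy Lyapunov argument closed by LaSalle's invariance principle, using Property~\ref{property} to annihilate the Coriolis terms and the gradient structure \eqref{defined} of the controller to telescope the potential and input terms. As Lyapunov candidate I would take the total energy
\[
W := K_P\,V(e) + \tfrac{1}{2}\,\dot q^{\mathrm{T}} M(q)\dot q .
\]
By {\bf P1} this is positive semidefinite and vanishes exactly on $\{e=0,\ \dot q=0\}$; since the end-effectors always lie in $\mathcal W$, the zero set within the feasible configurations is $\mathcal S_W\times\{0\}$. The result being local, I would restrict attention to a sublevel set of $W$ on which the positions stay inside the neighborhood $\mathcal S_{W_r}$ of Assumption~\ref{Ass:3}.

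Next I would differentiate $W$ along \eqref{dd}. Applying {\bf P2} blockwise, $\tfrac12\dot q_i^{\mathrm{T}}(\dot M_i-2C_i)\dot q_i=0$, so $\tfrac{d}{dt}\bigl(\tfrac12\dot q_i^{\mathrm{T}}M_i\dot q_i\bigr)=\dot q_i^{\mathrm{T}}u_i$ for every $i$; for a PA manipulator this collapses to $\dot q_{i,2}u_{i,2}$ because $u_{i,1}=0$. For the potential term, $\dot V=\hat e^{\mathrm{T}}\dot x=\sum_i\hat e_i^{\mathrm{T}}\dot x_i$, with $\dot x_i=J_i\dot q_i$ for $i\le N_1$ and $\dot x_i=\bar J_i\dot q_{i,2}$ for $i>N_1$; the latter identity is valid only because the speed constraint \eqref{speed} holds under Assumption~\ref{initial}. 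Substituting the control laws \eqref{eq:controller}, the $K_P$ cross-terms cancel in pairs ($K_P\hat e_i^{\mathrm{T}}J_i\dot q_i$ against $-K_P\dot q_i^{\mathrm{T}}J_i^{\mathrm{T}}\hat e_i$, and analogously for PA), leaving
\[
\dot W=-K_D\Bigl(\sum_{i=1}^{N_1}\|\dot q_i\|^2+\sum_{i=N_1+1}^{N}\dot q_{i,2}^2\Bigr)\le 0 .
\]

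Since $\dot W\le 0$, the chosen sublevel set is invariant, and I would invoke LaSalle's principle. On $\{\dot W=0\}$ one has $\dot q_i=0$ for $i\le N_1$ and $\dot q_{i,2}=0$ for $i>N_1$; the constraint \eqref{speed} then forces $\dot q_{i,1}=0$ for the PA manipulators as well, so $\dot q\equiv 0$ and $\ddot q\equiv 0$ on the largest invariant set $\mathcal M$. Substituting into \eqref{full} gives $u_i=0$ there, i.e.\ $J_i^{\mathrm{T}}\hat e_i=0$ for $i\le N_1$ and $\bar J_i^{\mathrm{T}}\hat e_i=0$ for $i>N_1$. Because {\bf A1} makes each $J_i$ invertible, $\hat e_i=0$ for every fully-actuated manipulator.

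The main obstacle is the PA manipulators, where the closed loop delivers only the single scalar relation $\bar J_i^{\mathrm{T}}\hat e_i=0$ — the degeneracy caused by $\bar J_i$ reducing to a column vector — so one cannot conclude $\hat e_i=0$ agentwise. Stacking the equilibrium conditions as $\Xi^{\mathrm{T}}(\theta)\,\bar B D(z)e=0$, where $\theta$ collects the reduced configuration variables and $\Xi$ is the block generalized Jacobian with blocks $J_i$ (full) and $\bar J_i$ (PA), the desired conclusion $e=0$ holds iff $\mathrm{range}\bigl(\bar B D(z)\bigr)\cap\ker\Xi^{\mathrm{T}}(\theta)=\{0\}$ near $x^*$. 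This is where I would use the augmented Jacobian: adjoining to $\bar J_i^{\mathrm{T}}$ the second scalar equation coming from the PA velocity constraint yields a square matrix whose nonsingularity is precisely Assumption~{\bf A2} ($\bar J_{i,1}\bar J_{i,2}\neq 0$); this, together with infinitesimal and minimal rigidity of $(\mathcal G,x^*)$ — which makes $D^{\mathrm{T}}(z)\bar B^{\mathrm{T}}\bar B D(z)$ positive definite, so that $e\mapsto\bar B D(z)e$ is injective near $x^*$ — rules out spurious equilibria and forces $e=0$ on $\mathcal M$ within $\mathcal S_{W_r}$. Hence $\mathcal M$ coincides locally with $\mathcal S_W\times\{0\}$, giving $x(t)\to\mathcal S_W$ and $\dot q(t)\to 0$. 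I expect the genuinely delicate point to be exactly this transversality step: identifying the correct ``second equation'' that defines each augmented Jacobian and verifying that {\bf A2} is the right nondegeneracy hypothesis for it; the energy/LaSalle portion is otherwise routine.
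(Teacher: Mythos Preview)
Your proposal tracks the paper's proof almost exactly: the same Lyapunov function $K_P V(e)+\tfrac12\dot q^{\mathrm T}M(q)\dot q$, the same cancellation via {\bf P2} yielding $\dot W=-K_D\|\xi\|^2$, and the same equilibrium analysis leading to $J_i^{\mathrm T}\hat e_i=0$ (fully-actuated) and $\bar J_i^{\mathrm T}\hat e_i=0$ (PA). The only cosmetic difference is that you close with LaSalle while the paper uses the $L^2$/uniform-continuity route and Barbalat's lemma; either works here.

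The one genuine gap is exactly the step you flagged as delicate, and your guess for where the missing equation comes from is off. You say the second scalar relation for the PA augmented Jacobian comes ``from the PA velocity constraint'', but that constraint \eqref{speed} has already been fully absorbed into the definition of $\bar J_i$ and produces nothing new at an equilibrium (it just says $\dot q_{i,1}=0$ once $\dot q_{i,2}=0$, which you already used). The paper's extra equation instead exploits the \emph{holonomy} of the PA manipulator (Lemma~\ref{constraint}): since $x_i=(x_{i,X},x_{i,Y})$ depends on the single scalar $q_{i,2}$, the paper writes the identity
\[
\frac{\partial V}{\partial x_{i,X}}\,\frac{\partial x_{i,X}}{\partial q_{i,2}}
=\frac{\partial V}{\partial x_{i,Y}}\,\frac{\partial x_{i,Y}}{\partial q_{i,2}},
\qquad\text{i.e.}\qquad
\bar J_{i,1}\,\hat e_{i,1}-\bar J_{i,2}\,\hat e_{i,2}=0,
\]
and pairs it with $\bar J_i^{\mathrm T}\hat e_i=\bar J_{i,1}\hat e_{i,1}+\bar J_{i,2}\hat e_{i,2}=0$ to get the square system $\bar J_i^{*}\hat e_i=0$ with $\det\bar J_i^{*}=-2\bar J_{i,1}\bar J_{i,2}$; this is precisely why {\bf A2} ($\bar J_{i,1}\bar J_{i,2}\neq 0$) is the relevant nondegeneracy hypothesis. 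Once $\hat e=0$ is established agentwise, the paper finishes exactly as you do, using injectivity of $e\mapsto \bar B D(z)e$ from infinitesimal minimal rigidity. So the missing idea in your sketch is to use the one-dimensional parametrization $q_{i,2}\mapsto x_i$ to manufacture the second scalar equation, rather than the differential velocity constraint.
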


\begin{proof}
Firstly, let us rewrite controllers \eqref{eq:controller}
into the following compact form
	\begin{equation}
\begin{aligned}
& u=-K_P J^\mathrm{T} (q) \hat{e}-K_D \xi,  \\
& J^ \mathrm{T}(q) =\left[\begin{array}{cc}
J_{\text {fa }}^\mathrm{T} \left(q_{\text {fa }}\right) & 0 \\
0 & \bar{J}_{\text {pa }}^\mathrm{T} \left(q_{\text {pa}}\right)
\end{array}\right],
\end{aligned}\label{controller}
\end{equation}
{where ``fa'' and ``pa'' refer to the fully-actuated manipulators and the PA manipulators, respectively.} The matrix $J_\text{fa}(q_\text{fa}) \in \mathbb{R}^{{2N_1} \times {2N_1}}$ is the block diagonal matrix of $J_i(q_i)$ for all $i \in \{1,...,N_1\}$ and $\bar J_\text{pa}(q_{\text{pa}}) \in \mathbb{R}^{{2N_2} \times {N_2}} $ ($N_2=N-N_1$) is the block diagonal matrix of $\bar J_i(q_{i})$ for all $i \in \{N_1+1,...,N\}$. The stacked vectors $u,\xi,q_{\text {fa}},q_{\text {pa}}$ are respectively given by
\[
\begin{aligned}
& u=\operatorname{col}\left(u_1, \ldots, u_{N_1}, u_{N_1+1,2}, \ldots, u_{N,2}\right) \in \mathbb{R}^{2N_1+N_2}, \\
& \xi=\operatorname{col}\left(\dot{q}_1, \ldots, \dot{q}_{N_1}, \dot{q}_{N_1+1,2}, \ldots, \dot{q}_{N,2}\right) \in \mathbb{R}^{2N_1+N_2},\\
& q_{\text {fa}}=\operatorname{col}\left(q_1, \ldots, q_{N_1}\right) \in \mathbb{R}^{2 N_1}, \\
& q_{\text {pa}}=\operatorname{col}\left(q_{N_1+1}, \ldots, q_{N}\right) \in \mathbb{R}^{2 N_2}.
\end{aligned}
\]

Consider the  Lyapunov function as follows
	\begin{equation}
		U ={K_P}V(e) + \frac{1}{2}{\dot q ^\mathrm{T} }M(q){\dot q}, \label{U}
	\end{equation}
where $V(e)$ is as in (\ref{Ve}). Routine computation to the time derivative of (\ref{U}) yields
\begin{align}
\nonumber \dot U & = {K_P}{\left( {\frac{{\partial V}}{{\partial x}}} \right)^\mathrm{T}}\dot x + {{\dot q}^ \mathrm{T}}(M(q)\ddot q) + \frac{1}{2}{{\dot q}^\mathrm{T} }\dot M(q)\dot q \\
\nonumber & =  {K_P} {{\hat e}^\mathrm{T} }J(q)\xi  + {{\dot q}^\mathrm{T} }(\bar u - C(q,\dot q)\dot q) + \frac{1}{2}{{\dot q}^\mathrm{T} }\dot M(q)\dot q \\
& = {K_P} {{\hat e}^\mathrm{T} }J(q)\xi  + {{\dot q}^\mathrm{T} }\bar u ={K_P} {{\hat e}^\mathrm{T} }J(q)\xi  + {u^\mathrm{T} }\xi, 
\label{dot U}
\end{align}
where the second equality is due to  
(\ref{defined}), (\ref{velocity}), (\ref{velocity 2}) and (\ref{dd}), and the third equality is due to {\bf P2} of Property \ref{property}. Substituting (\ref{controller}) into {\bb (\ref{dot U})} yields
\begin{equation}
	\dot U = - {K_D}\|\xi\|^2. \label{dot U2}
\end{equation}
It follows from the properness of $U$ and {\bb (\ref{dot U2})} that $e$, $\dot q$ are bounded and $\xi\in L^2(\rline_+)$. Correspondingly, the boundedness of $e$ and $\dot q$ also implies that $\dot\xi \in L^\infty(\rline_+)$, i.e. $\xi$ is uniformly continuous. By the generalized Barbalat's lemma \cite[Theorem 4.4]{Logemann2004},  it implies that $\xi(t)\to 0$ as $t\to \infty$. Under Assumption \ref{initial} and using Lemma \ref{constraint}, we have $\dot q(t)\to 0$ as $t\to\infty$, and consequently, from \eqref{dd}, $\bar u(t)\to 0$ as $t\to \infty$.

Accordingly, the asymptote of \eqref{controller} satisfies  
\begin{equation}
 - {K_P}{J^\mathrm{T}(q) }\hat e = 0. \label{eq mother}
\end{equation}
Let us now decompose (\ref{eq mother}) into
\begin{equation}
- {K_P}J_i^\mathrm{T}(q_i) {{\hat e}_i} = 0, \quad i = 1,...,N_1, \label{eq1}
\end{equation}
and 
\begin{equation}
- {K_P} \bar J_i^\mathrm{T}(q_i) {{\hat e}_i} = 0, \quad i = N_1+1,...,N.\label{eq2}
\end{equation}
On the one hand, under {\bf A1} in Assumption \ref{Ass:3}, 
$J_i(q_i)$ is full rank in $S_{W_r}$, so that $\hat e_i =0$ for all $i \in \{1,...,N_1\}$ from (\ref{eq1}). 
On the other hand, since $\bar J_i(q_i)$ is not full rank, we need an additional equation to complete it. 
Note that for the PA manipulator $i$ with $\dot q_i(0)=0$, the end-effector position $x_i$ depends entirely on its actuated joint position $q_{i,2}$. Thus we have
\begin{equation}
\frac{{\partial V(e)}}{{\partial {q_{i,2}}}} = \frac{{\partial V(e)}}{{\partial {x_{i,X}}}}\frac{{\partial {x_{i,X}}}}{{\partial {q_{i,2}}}} = \frac{{\partial V(e)}}{{\partial {x_{i,Y}}}}\frac{{\partial {x_{i,Y}}}}{{\partial {q_{i,2}}}}. \label{deri}
\end{equation}
Notice that ${{\hat e}_{i,1}} = \frac{{\partial V(e)}}{{\partial {x_{i,X}}}}$, ${{\hat e}_{i,2}} = \frac{{\partial V(e)}}{{\partial {x_{i,Y}}}}$, $\bar J_{i,1}(q_i) = \frac{{\partial {x_{i,X}}}}{{\partial {q_{i,2}}}}$, $\bar J_{i,2}(q_i) = \frac{{\partial {x_{i,Y}}}}{{\partial {q_{i,2}}}}$, so that we can rewrite (\ref{deri}) into 
\begin{equation}
{{\bar J}_{i,1}(q_i)}{{\hat e}_{i,1}} - {{\bar J}_{i,2}(q_i)}{{\hat e}_{i,2}} = 0. \label{supply}
\end{equation}
Combining (\ref{eq2}) and (\ref{supply}) yields 
\begin{equation}
\bar{J}_i^*(q_i) \hat{e}_i=0, \; \bar{J}_i^*(q_i):=\left[\begin{array}{cc}
\bar{J}_{i,1}(q_i) & \bar{J}_{i,2}(q_i) \\
\bar{J}_{i,1}(q_i) & -\bar{J}_{i,2}(q_i)
\end{array}\right].
\end{equation}
The matrix ${\bar J_i ^ *(q_i)}$ is invertible if and only if ${\bar J_{i,1}}(q_i) {\bar J_{i,2}}(q_i) \ne 0$. Then, under {\bf A2} in Assumption \ref{Ass:3}, we have $\hat e_i=0$  for all $i\in \{N_1+1,...,N\}$ from (\ref{eq2}). 
Therefore, 
we have $\hat e =0$. Since $ D^\mathrm{T}(z) \bar B^\mathrm{T}$ is full rank, it follows immediately from \eqref{defined} that in the asymptote we have $e=0$, i.e. $e(t)\to 0$ as $t\to\infty$. 
\end{proof}

\section{Simulation Results}
\label{simu}
We validate the distributed formation controller \eqref{eq:controller} by several numerical simulations in this section. We consider a network of $N = 4$ two-link manipulators moving in the horizontal $X-Y$ plane. The mechanical parameters of these manipulators are the same as those in \cite{wu2022distributed}, where $m_{i,1}=1.2$ kg, $m_{i,2}=1.0$ kg, $l_{i,1}= l_{i,2}=0.75$ m, $L_{i,1}=L_{i,2}=1.5$ m, $I_{i,1} = 0.2250\; \mathrm{kg}  \mathrm{m^2}$ and $I_{i,2}=0.1875$ ${\mathrm{kg}}\mathrm{m^2}$ for $i= 1,2,3,4$.

Let the desired formation shape be a square with side length of $0.4$ m. The incidence matrix $B$ of the corresponding formation graph $\mathcal{G}$ is 
\[B=\left[\begin{array}{ccccc}
	1 & 0 & 0 & -1 & 1 \\
	-1 & 1 & 0 & 0 & 0 \\
	0 & -1 & 1 & 0 & -1 \\
	0 & 0 & -1 & 1 & 0
\end{array}\right].\]
The manipulators' base are fixed at $x_{10}=[0, 0]^\mathrm{T}$, $x_{20}=[5, 0]^\mathrm{T}$, $x_{30}=[5, 3]^\mathrm{T}$ and $x_{40}=[0, 3]^\mathrm{T}$, respectively. {\bbb The manipulators start from $q_1(0) = [-\pi/2, \pi/3]^\mathrm{T}$, $q_2(0)  = [\pi/6, \pi/3]^\mathrm{T}$, $q_3(0)  = [\pi/2, \pi/3]^\mathrm{T}$ and $q_4(0)  = [-\pi/2, -\pi/3]^\mathrm{T}$ with zero initial joint velocities, respectively.} We use the distributed formation controller \eqref{eq:controller}  with the controller gains of $K_P = 800, K_D = 600$. Correspondingly, we consider three numerical cases:
\begin{enumerate}
\item Case 1: Manipulators 1--3 are fully-actuated and manipulator 4 is the  PA manipulator. The result is shown in Figs. \ref{formation}--\ref{states}.
\item Case 2: Manipulators 1 and 2 are fully-actuated, and manipulators 3 and 4 are the  PA manipulators. The result is shown in Fig. \ref{formation_two}.
\item Case 3: Manipulator 1 is fully-actuated and manipulators 2--4 are the PA manipulators. The result is shown in Fig. \ref{formation_three}.
\end{enumerate}  

For Case 1, consider the Jacobian matrix $\bar J_4$ associated to the PA manipulator 4. {\bb As in Remark \ref{remark}, $\bar J_4$ can be expressed as a function of $q_{4,2}$ only.} We can calculate numerically that $\bar J_{4,1} \bar J_{4,2}=0$ has only three solutions at  $q_{4,2} = -1.3098, 0.2137, 2.2972$ for all $q_{4,2} \in (-\pi, \pi)$. That means Assumption \ref{Ass:3} is satisfied if the manipulator 4 does not go through these three points {\bb and every manipulator $i\in \{1,2,3\}$ does not go through $q_{i,2} = k\pi, k \in \mathbb{Z}$}. 
Fig. \ref{formation} shows the trajectories of the manipulators’ end-effector and Fig. \ref{distance} shows that the distances between the end-effectors converge to expected values. From Figs. \ref{formation} and \ref{distance}, we observe that end-effectors in the network eventually form the expected formation shape. From Fig. \ref{states}, which shows all joint position and velocity signals, we notice that $-1.050<q_{4,2}<-1.005$ and $0.950<q_{i,2}<1.050$ ($i=1,2,3$). 
That means the manipulators do not go through the calculated singular points and Assumption \ref{Ass:3} is satisfied.

Figs. \ref{formation_two} and \ref{formation_three} show that the distributed controller is also effective for Cases 2 and 3, including more underactuated PA manipulators in the network. Notice that the fully-actuated manipulators move in a 2D plane while the  PA manipulators can only move in a line. 

\begin{figure}[H]
	\centering
	\includegraphics[scale=0.4]{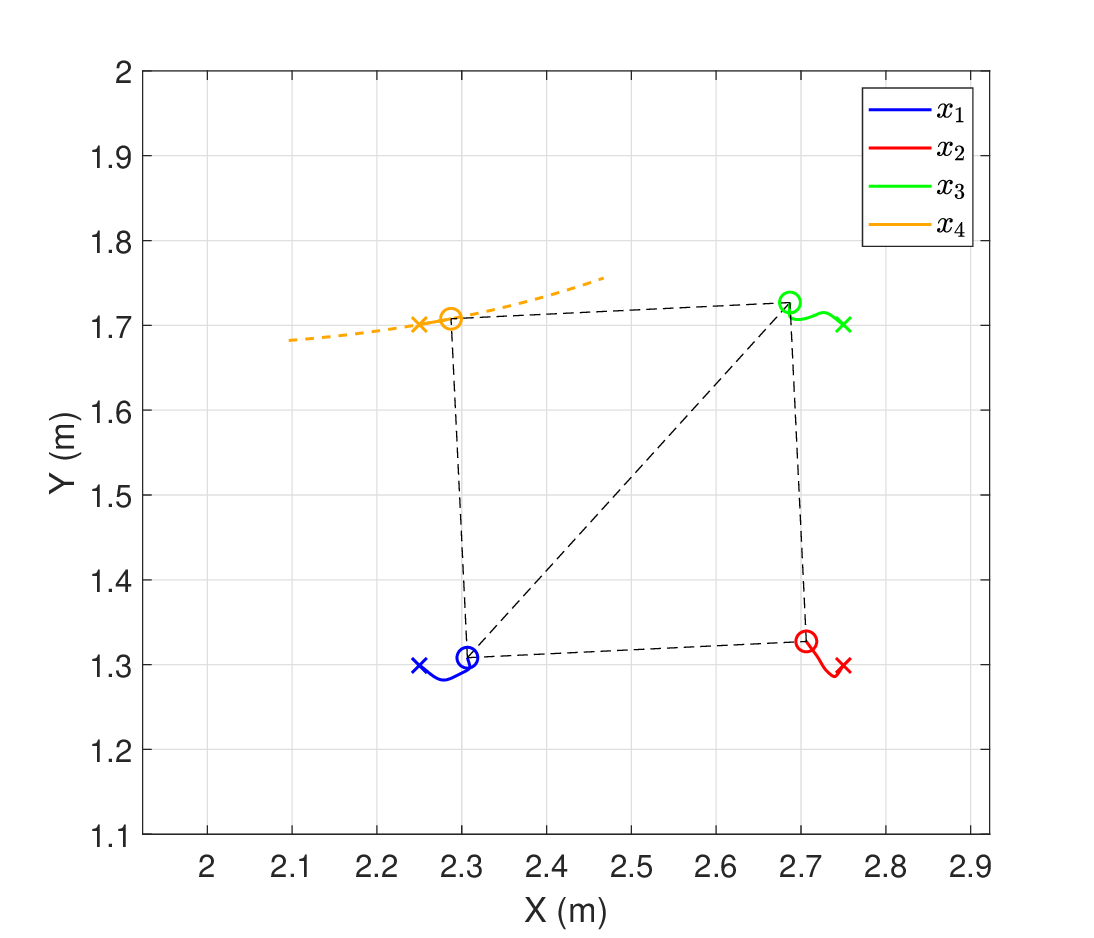}
	\caption{Simulation result of the distributed end-effector formation controller for Case 1. The colored solid lines are the trajectories of each manipulator's end-effector, where $\times$ and $\circ$ denote initial positions and final positions respectively. The colored dashed line is the subset of the workspace of the PA manipulator.}
 \label{formation}
\end{figure}

\begin{figure}[H]
	\centering
	\includegraphics[scale=0.4]{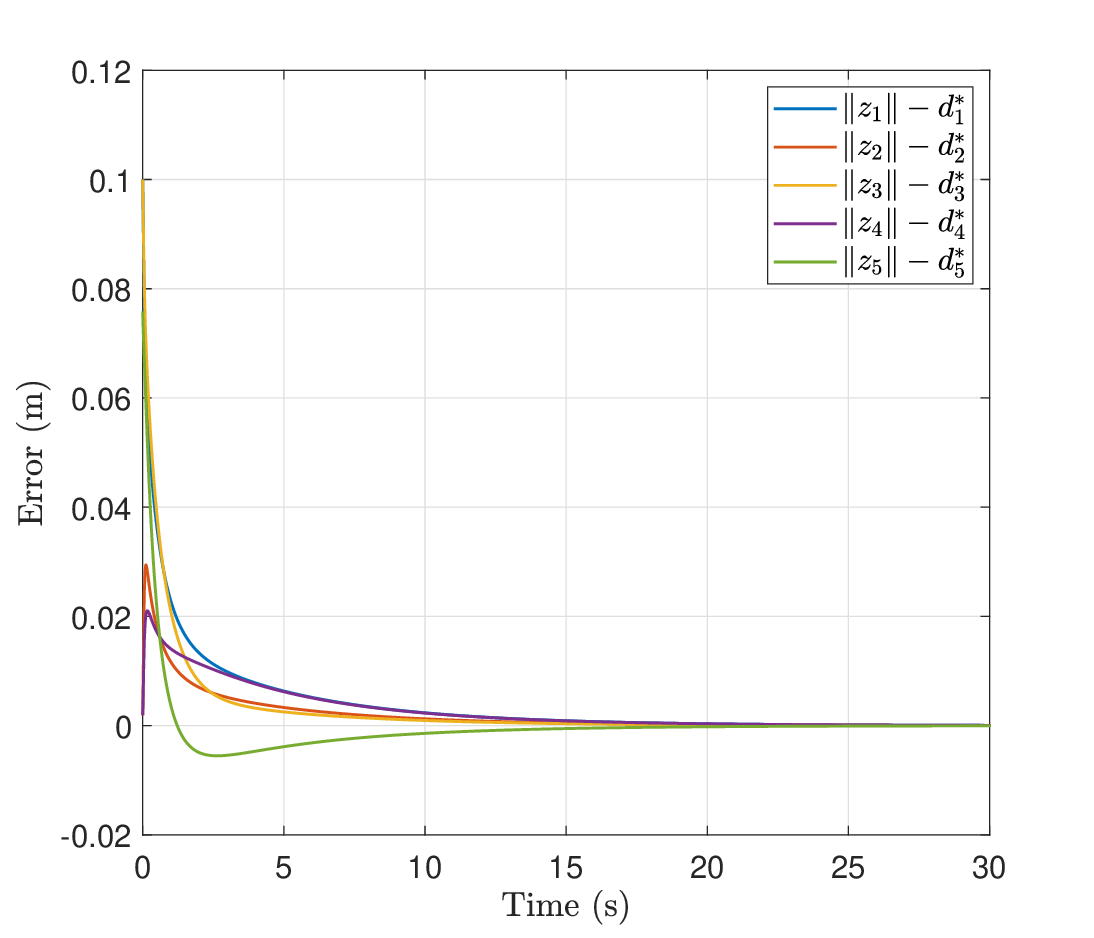}
	\caption{The plot of inter-agent distance error signals for the simulation of the distributed end-effector formation controller in Case 1.}
  \label{distance}
\end{figure}

\begin{figure}[H]
	\centering
	\includegraphics[scale=0.4]{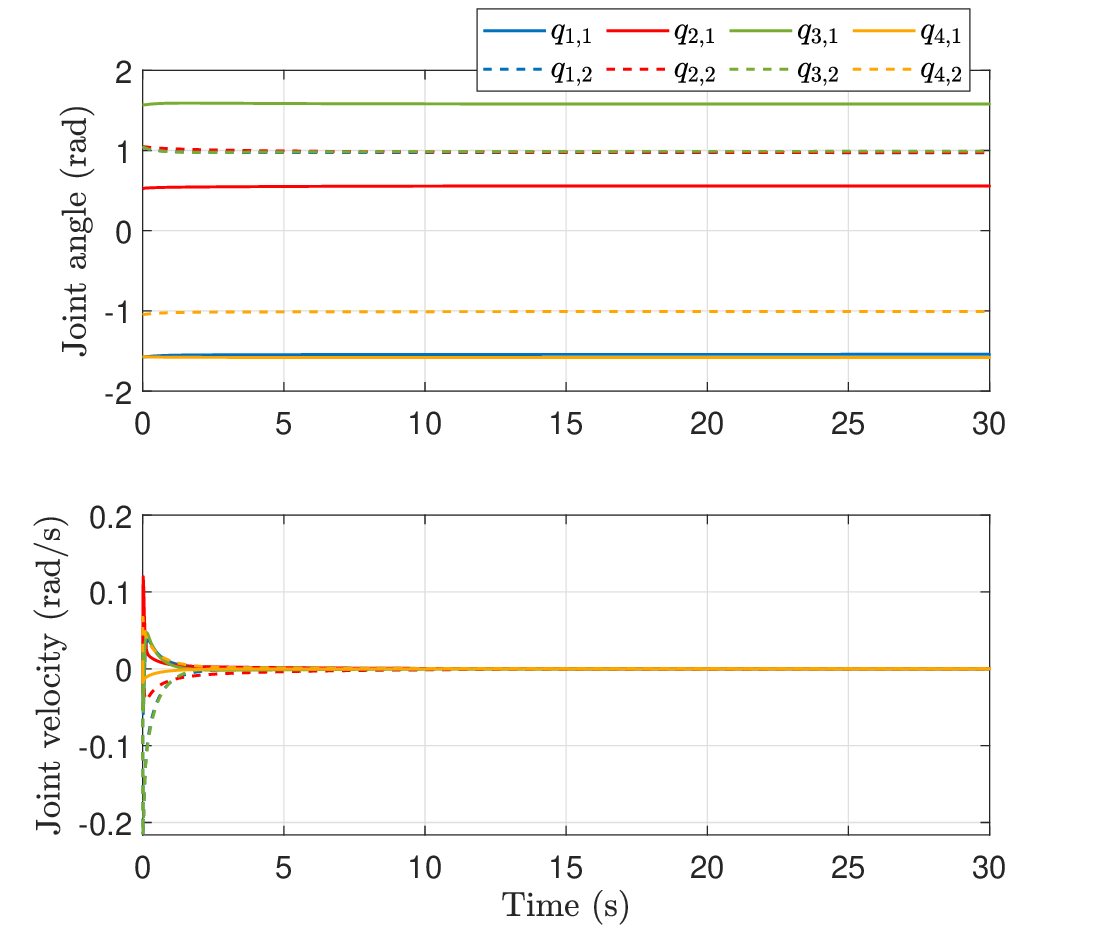}
	\caption{{\bbb The plot of all joint position and velocity signals for the simulation of the distributed end-effector formation controller in Case 1.}}
 \label{states}
\end{figure}

\begin{figure}[H]
	\centering
	\includegraphics[scale=0.4]{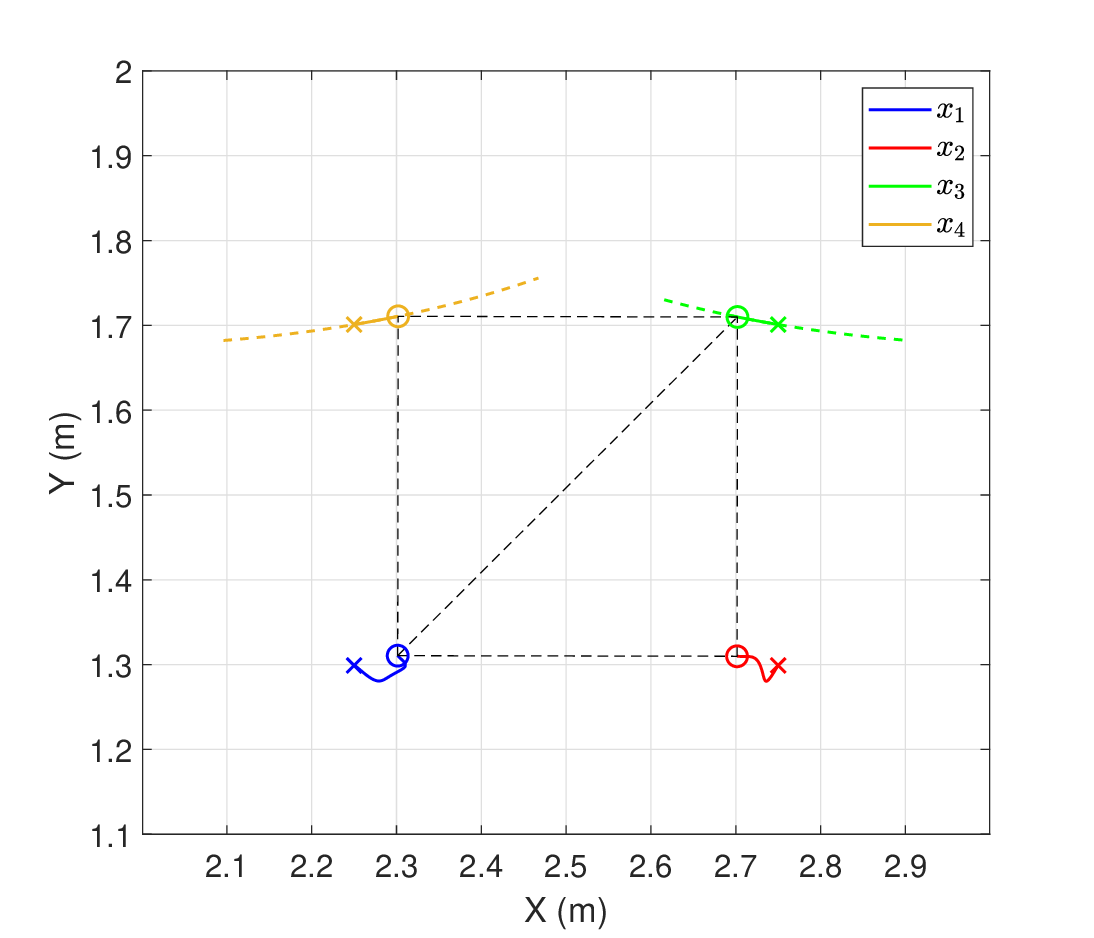}
	\caption{Simulation result of the distributed end-effector formation controller for Case 2. The colored solid lines are the trajectories of each manipulator's end-effector, where $\times$ and $\circ$ denote initial positions and final positions respectively. The colored dashed lines are subsets of the workspace of each PA manipulator.}
  \label{formation_two}
\end{figure}

\begin{figure}[H]
	\centering
	\includegraphics[scale=0.4]{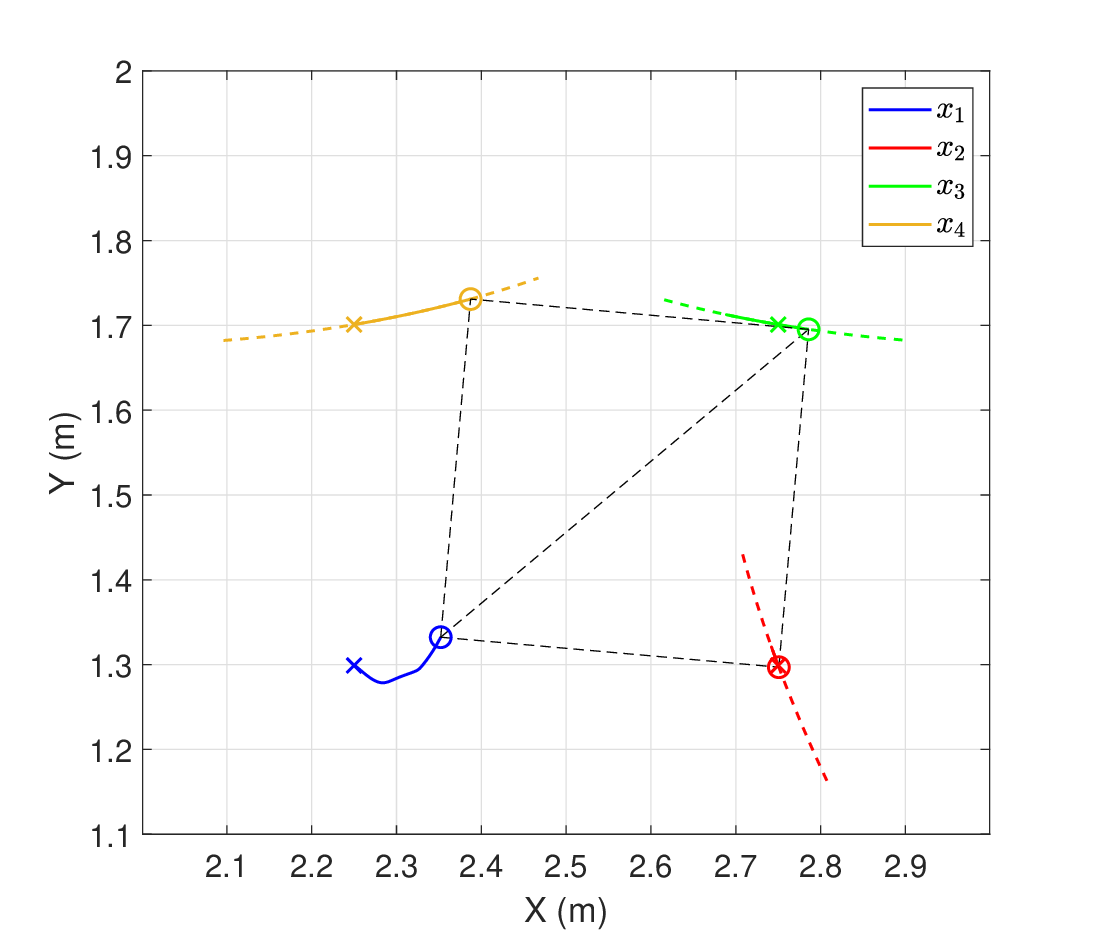}
	\caption{Simulation result of the distributed end-effector formation controller for Case 3. The colored solid lines are the trajectories of each manipulator's end-effector, where $\times$ and $\circ$ denote initial positions and final positions respectively. The colored dashed lines are subsets of the workspace of each PA manipulator.}
   \label{formation_three}
\end{figure}

\section{Conclusion}
This paper studied the end-effector distributed 
formation control for a mixed group of two-link manipulators moving in a horizontal plane, which comprises of fully-actuated manipulators and PA manipulators. Using the integrability property of the PA manipulator, we proposed and analyzed the distributed formation controller for end-effectors. 

\label{conclusion}
 
\bibliographystyle{IEEEtran}
\bibliography{pzy}

\end{document}